\newtheorem{theorem}{Theorem}[section]
\newtheorem{lemma}[theorem]{Lemma}
\newtheorem{corollary}[theorem]{Corollary}	
\newtheorem{remark}[theorem]{Remark}	
\newtheorem{problem}[theorem]{Problem}
\title{\LARGE \bf
Controllability to Equilibria of the 1-D Fokker-Planck Equation with Zero-Flux Boundary Condition
}
\author{Karthik Elamvazhuthi, Hendrik Kuiper, and Spring Berman
\thanks{This work was supported by National Science Foundation (NSF) Award CMMI-1436960 and by 
ONR Young Investigator Award N00014-16-1-2605.}
\thanks{Karthik Elamvazhuthi and Spring Berman are with the School for Engineering of
Matter, Transport and Energy, Arizona State University, Tempe, AZ, 85281
USA {\tt\small \{karthikevaz, Spring.Berman\}@asu.edu}. Hendrik Kuiper is with the School of Mathematical 
and Statistical Sciences, Arizona State University, Tempe, AZ, 85281 USA {\tt\small \{kuiper@asu.edu\}}%
}}
\begin{document}

\maketitle
\thispagestyle{empty}
\pagestyle{empty}

\begin{abstract}
We consider the problem of controlling the spatiotemporal probability distribution of a robotic swarm that evolves according to a reflected diffusion process, 
using the space- and time-dependent drift vector field parameter as the control variable. 
In contrast to previous work on control of the Fokker-Planck equation, a zero-flux boundary condition is imposed on the partial differential equation that governs the swarm probability distribution, and only bounded vector fields are considered to be admissible as control parameters. Under these constraints, we show that any initial probability distribution can be transported to a target probability distribution under certain assumptions on the regularity of the target distribution. In particular, we show that if the target distribution is (essentially) bounded, has bounded first-order and second-order partial derivatives, and is bounded from below by a strictly positive constant, then this distribution can be reached exactly using a drift vector field that is bounded in space and time. Our proof is constructive and based on classical linear semigroup theoretic concepts. 
\end{abstract}

\section{INTRODUCTION}
\label{sec:intro}
In recent years, there has been much work on the modeling and control of swarms of homogeneous agents using mean-field models. These mean-field models are typically defined by a system of partial differential equations (PDEs) that describe how an initial probability measure is pushed forward under the action of an ordinary differential equation (ODE) or a stochastic differential equation (SDE).  In this context, spatial information on the agent positions is modeled using probability measures, and the mean-field control problem is to design parameters of the system of PDEs so that these probability measures evolve in a desirable manner. 

This perspective has led to a number of works on optimal control of PDEs with the goal of optimizing swarm behavior. In \cite{milutinovic2006modeling}, the authors use a maximum principle for control of infinite-dimensional systems to design optimal switching parameters that achieve target swarm densities. The work in \cite{annunziato2010optimal} addresses the problem of optimal control of the Fokker-Planck equation to ensure that its solution tracks a predefined time-dependent reference density. 
A similar approach is used for mean-field games and mean-field type controls in \cite{bensoussan2013mean,gomes2014mean}, which consider controlled versions of Vlasov-Mckean type SDEs whose coefficients are coupled to the distribution of the stochastic process. While most prior work has considered the case where there is noise in the agent dynamics, there has also been some recent work on the noise-less case where each agent evolves according to an ODE \cite{bongini2015mean,fornasier2014mean} rather than an SDE.

The literature on control of mean-field models has mostly focused on the synthesis of optimal control strategies. However, there has been some work on questions of stabilization and controllability of such models. For example, the design of output feedback laws for designing globally stable invariant distributions of stochastic processes was considered in \cite{mesquita2012jump}. Control of swarm protocols governed by the kinetic Cucker-Smale model was addressed in \cite{piccoli2015control} to produce flocking behavior. The Benamou-Brenier fluid dynamic formulation of optimal transport problems \cite{benamou2000computational} can also be interpreted as a problem of optimal control and controllability of the continuity equation. See also the work by Brockett \cite{brockett2012notes} on related problems on the control of Liouville equations. Closer to the work in this paper are the studies \cite{blaquiere1992controllability,dai1991stochastic,porretta2014planning} on controllability of Fokker-Planck equations. In \cite{blaquiere1992controllability} and \cite{dai1991stochastic}, it was proven that solutions of the Fokker-Planck equation evolving on $\mathbb{R}^n$ can be controlled to a large class of target distributions. This result has been extended to the case where the agent dynamics are governed by general linear SDEs and the initial and target distributions are Gaussian \cite{chen2016optimalI,chen2016optimalII}. Poretta \cite{porretta2014planning} considered the problem of controllability of the Fokker-Planck equation when the solutions evolve on a torus, along with the well-posedness of an associated mean-field game problem. 

In this work, we consider controllability of the Fokker-Planck equation on a bounded one-dimensional domain with zero-flux boundary condition. This problem is of practical importance for swarm robotic applications in which the agents' spatiotemporal behavior can be modeled by the Fokker-Planck equation and the agents are constrained to evolve in a bounded domain \cite{elamvazhuthi2016coverage,foderaro2014distributed,milutinovic2006modeling,prorok2011multi}.  Specifically, we prove (see Theorem \ref{maintheo}) that when the target probability density is sufficiently regular, it can be reached from any square-integrable initial probability distribution in a given finite time using bounded control inputs. Our arguments are entirely based on linear operator semigroup theoretic concepts and make a straightforward exploitation of the fact  that the exponential convergence rate to the target equilibrium can be increased arbitrarily using an appropriate density-feedback law. This approach differs from those in 
  \cite{blaquiere1992controllability} and \cite{dai1991stochastic}, which are based on probabilistic and stochastic control theoretic concepts, and the approach in \cite{porretta2014planning}, which uses observability inequality type arguments. Although it might be possible to adapt these methods for our scenario, the constraint of the vector field being bounded is not imposed in these works, which is more relevant for practical scenarios of interest to us. On the other hand, unlike the works \cite{dai1991stochastic,blaquiere1992controllability,porretta2014planning}, we do not address any issue regarding the optimality of the control laws. We note that although we restrict our analysis in this paper to the case of 1-D domains for the sake of simplicity, our approach has natural extensions to the more practical multi-dimensional case, which will be the subject of our future work.

\section{PROBLEM FORMULATION}
Consider a swarm of $n$ agents deployed on the one-dimensional domain $[0,1]$. The position of each agent, indexed by $i \in \lbrace 1,2,...,n \rbrace$, evolves according to a stochastic process $Z_i(t) \in [0,1]$, {where $t$ denotes time. Since the random variables that correspond to the dynamics of each agent are independent and identically distributed, we can drop the subscript $i$ and define the problem in terms of a single stochastic process, $Z(t) \in [0,1]$. The deterministic motion of each agent is defined by a vector field $v(x,t) \in \mathbb{R}$, where $x \in [0,1]$. This motion is perturbed by the Wiener process $W(t)$, which models noise. This process can be a model for stochasticity arising from inherent sensor and actuator noise. Alternatively, noise could be actively programmed into the agents' motion 
to implement more exploratory agent behaviors and to take advantage of the smoothening effect of the process on the agents' probability densities. 
Given the parameter $v(x,t)$, the stochastic process $Z(t)$ satisfies the following SDE \cite{tanaka1979stochastic}:
\begin{eqnarray}
\label{eq:SDE}
dZ(t) &=& ~ v(Z,t)dt + dW(t) + d\psi(t),  \\ \nonumber
Z(0) &=& ~ Z_0,
\end{eqnarray}
where $d\psi(t) \in \mathbb{R}$ is called the {\it reflecting function}, a stochastic process that constrains $Z(t)$ to the domain $[0,1]$.   

\begin{problem} 
\label{prob:SDEctrl}
Given $T>0$ and $f:[0,1] \rightarrow \mathbb{R}^{+}$ such that $\int_0^1 f(x)dx = 1$, determine if there exists a feedback control law $v: [0,1] \times [0,T] \rightarrow \mathbb{R}$ such that the process \eqref{eq:SDE} satisfies $\mathbb{P}(Z(T) \in \Gamma) = \int_\Gamma f(x)dx$ for each Borel subset $\Gamma \subset [0,1]$.
\end{problem}

The {\it Kolmogorov forward equation} corresponding to the SDE  \eqref{eq:SDE} is given by \cite{pilipenko2014introduction}:
\begin{eqnarray}
\nonumber
&y_t = y_{xx} -  (v y)_x & in  ~~ (0,1) \times [0,T] \\ \nonumber
&y(\cdot,0) = y_0 & in ~~ (0,1)  \\ \nonumber
&(y_x-vy)(0,\cdot)= (y_x-vy)(1,\cdot) = 0  & in ~~ [0,T] \label{eq:Mainsys1} \\
\end{eqnarray}
The solution $y(x,t)$ of this equation represents the probability density of a single agent occupying position $x \in [0,1]$ at time $t$, or alternatively, the density of a population of agents at this position and time.
The PDE \eqref{eq:Mainsys1} is related to the SDE \eqref{eq:SDE} by the relation $\mathbb{P}(Z(t) \in \Gamma) = \int_{\Gamma} y(x,t)dx$ for all $t \in [0,T]$ and all measurable $\Gamma$.
Problem \ref{prob:SDEctrl} can be reframed in terms of equation \eqref{eq:Mainsys1} as a PDE controllability problem as follows:
\begin{problem}
\label{prob:ADEctrl}
Given $T>0$, $y_0 :[0,1] \rightarrow \mathbb{R}^{+}$, and $f:[0,1] \rightarrow \mathbb{R}^{+}$ such that $\int_0^1 y_0(x)dx =\int_0^1 f(x)dx = 1$, determine whether there exists a space- and time-dependent parameter $v: [0,1] \times [0,T] \rightarrow \mathbb{R}$ such that the solution $y$ of the PDE  \eqref{eq:Mainsys1} satisfies $y(\cdot,T) = f$.
\end{problem}

\section{PRELIMINARIES AND NOTATION}
We define $L^2(0,1)$ as the Hilbert space of square-integrable real-valued functions over the unit interval, $(0,1) \subset \mathbb{R}$. The Hilbertian structure of $L^2(0,1)$ is induced by the standard inner product, $\langle \cdot , \cdot \rangle_{2} :  L^2(0,1) \times L^2(0,1) \rightarrow \mathbb{R}$, given by:
\begin{equation}
\langle p,q\rangle_{2} = \int_{0}^1 p(x)q(x)dx
\end{equation}
for each $p,q\in L^2(0,1)$. The norm $\|\cdot\|_{2}$ on the space $L^2(0,1)$ is defined as
\begin{equation}
\|p\|_{2} = \langle p, p\rangle^{1/2}_{2}
\end{equation}
for all $p \in L^2(0,1)$. For a function $r \in L^2(0,1)$ and a given constant $c$, we write $r \geq c$ to imply that $r(x)\geq c$ for almost every $x \in (0,1)$.

We define the Sobolev space $H^1(0,1) = \big \lbrace z \in L^2(0,1): z_{x} \in L^2(0,1) \big \rbrace$. We equip this space with the usual Sobolev norm $\|\cdot\|_{H^1}$, given by
\begin{equation}
\|p\|_{H^1} = \Big( \|p\|^2_{2} + \|p_x\|^2_{2}\Big)^{1/2} \nonumber
\end{equation}
for each $p \in H^1(0,1)$. We will also need the space $H^2(0,1) = \big \lbrace z \in H^1(0,1): z_{xx} \in L^2(0,1) \big \rbrace$, which will be equipped with the norm $\|\cdot\|_{H^2}$ given by
\begin{equation}
\|p\|_{H^2} = \Big( \|p\|^2_{H^1} + \|p_{xx}\|^2_{2}\Big)^{1/2} \nonumber
\end{equation}
for each $p \in H^1(0,1)$. We define $L^{\infty}(0,1)$ as the space of essentially bounded measurable functions on $(0,1)$. The space $L^{\infty}(0,1)$ is equipped with the norm
\begin{equation}
\|z\|_{\infty} = ess ~ sup_{x \in (0,1)} |z(x)|, 
\end{equation}
where $ess ~ sup_{x \in (0,1)}(\cdot)$ denotes the {\it essential supremum} attained by its argument over the interval $(0,1)$.
For a given $a \in L^{\infty}(0,1)$, $L^2_a(0,1)$ will refer to the set of all functions $p$ such that
\begin{equation}
\int_0^1|p(x)|^2a(x)dx< \infty.
\end{equation}
The space $L^2_a(0,1)$ is a Hilbert space with respect to the weighted inner product $\langle \cdot , \cdot \rangle_{a} :  L^2_a(0,1) \times L^2_a(0,1) \rightarrow \mathbb{R}$, given by:
 \begin{equation}
\langle p, q\rangle_{a} = \int_{0}^1 p(x)q(x)a(x)dx
\end{equation}
for each $f,g \in L^2(0,1)$.
We also define the Sobolev spaces
\begin{equation}
W^{1,\infty}(0,1) = \big \lbrace z \in L^{\infty}(0,1): z_{x} \in L^{\infty}(0,1) \big \rbrace,
\end{equation}
\begin{equation}
W^{2,\infty}(0,1) =  \big \lbrace z \in W^{1,\infty}(0,1): z_{xx} \in L^{\infty}(0,1) \big \rbrace.
\end{equation}
The space $C([0,T];L^2(0,1))$ consists of all continuous functions $u:[0,T] \rightarrow L^2(0,1)$ for which \[ \| u\|_{C([0,T];L^2(0,1))} ~:=~ \max_{0 \leq t \leq T} \|u(t)\|_2 ~<~ \infty . \]  

We will need an appropriate notion of a solution of the PDE \eqref{eq:Mainsys1}. Toward this end, let $A$ be a closed linear operator on a Hilbert space $X$ with domain $\mathcal{D}(A)$. For a given time $T>0$, a {\it mild solution} of the ODE
\begin{equation} 
\dot{u}(t) = Au(t); \hspace{2mm} u(0) = u_0 \in L^2(0,1)
\end{equation}
is a function $u \in C([0,T];L^2(0,1))$ such that $u(t) = u_0 + A\int_0^tu(s)ds$ for each $t \in [0, T]$. Under appropriate conditions satisfied by $A$, the mild solution is given by a semigroup of linear operators, $(T(t))_{t\geq 0}$, that are {\it generated} by the operator $A$. That is, the solution of the above ODE is given by $u(t)= T(t)u_0$ for each $t \in [0,T]$.

The differential equations that we analyze in this paper will be non-autonomous in general. Hence, we must adapt the notion of a mild solution to these types of equations. Let $A_i$ be a closed linear operator with domain $\mathcal{D}(A_i)$ for each $i \in \mathbb{Z}_+$. For a certain time interval $[0,T]$, a piecewise constant family of operators is given by a map, $t \mapsto A(t)$, for which there exists a partition of $[0,T] = \cup_{i \in \mathbb{Z}_+} [a_i,a_{i+1})$ such that $a_i \leq a_{i+1}$ for each $i \in \mathbb{Z}_+$ and $A(t) = A_i$ for each $t \in [a_i,a_{i+1})$. Then a mild solution of the ODE
\begin{equation} 
\label{eq:nonaut}
\dot{u}(t) = A(t)u(t); \hspace{2mm} u(0) = u_0 \in X
\end{equation}
is a function $u \in C([0,T];X)$ such that 
\begin{equation} 
\label{eq:mldsol}
u(t) = u_0 + \sum_{i \in  \mathbb{Z}_+}A_i\int_{min \lbrace t,a_{i} \rbrace}^{min \lbrace t,a_{i+1} \rbrace} u(s)ds
\end{equation}
for each $t \in [0, T]$. There is in fact a more general notion of mild solutions that arise from two-parameter semigroups of operators generated by time-varying linear operators. However, the definition in \eqref{eq:mldsol} will be sufficient for our purposes, since one can construct solutions of the ODE \eqref{eq:nonaut} by treating it as an autonomous system in each time interval $[a_i,a_{i+1})$ and patching these solutions together to obtain the solution $u$. Note that the mild solution is defined with respect to an operator $A$ or  collection of operators $A(t)$; when we refer to such a solution, the associated operator(s) will be clear from the context.


\section{CONTROLLABILITY ANALYSIS}

In this section, we prove our main result, Theorem \ref{maintheo}, as a solution to Problem \ref{prob:ADEctrl}. 
However, we first note the following result on exponential stabilizability of equilibrium distributions which is also useful from the point of view of multi-agent control problems and gives an 'approximate' result to the controllability problem \ref{prob:ADEctrl}. Particularly, this theorem gives a candidate time-independent vector field if one desires convergence to a given target distribution at an exponential rate. This is similar to our previous work \cite{elamvazhuthi2016coverage}, where we used a spatially inhomogenous diffusion coefficient to stabilize desired probability densities. However, one needs to assume higher regularity on the target distributions in the following scenario than the work in \cite{elamvazhuthi2016coverage}.

\begin{theorem}
\label{asymp}
Let $y_0 \in L^2(0,1)$ and $f \in W^{1,\infty}(0,1)$ such that $y_0 \geq 0$,  $f \geq k> 0$ for some strictly positive constant $k$ and $\int_{0}^1 y_0(x)dx = \int_{0}^1f(x)dx = 1$. Suppose $v(x,t) =  f_x(x)/f(x)$  for all $t \in [0, \infty)$ and almost every $x \in (0, 1)$ in the PDE, \eqref{eq:Mainsys1}, then a unique mild solution $y \in C([0,\infty);L^2(0,1))$ of the PDE exists and satisfies the estimate
\begin{equation}
\|y(\cdot, t) - f\|_{2} \leq M e^{-\lambda t} \|y_0 - f\|_{2} 
\end{equation}
for some positive constants, $M \geq 1$, $\lambda$ and all $t \in [0, \infty)$.
\end{theorem}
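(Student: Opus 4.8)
The plan is to exploit the gradient (detailed-balance) structure of the drift $v = f_x/f = (\log f)_x$, which makes the no-flux Fokker--Planck operator self-adjoint in a suitable weighted space and reduces the decay estimate to a spectral-gap computation. Writing the operator in flux form as $Ay = \partial_x(y_x - vy)$, I would first perform the change of variables $w = y/f$. Since $f \geq k > 0$ and $f \in W^{1,\infty}$, this is a bounded, boundedly invertible map, and a direct computation gives $y_x - vy = f\,w_x$, so that the flux vanishes at $x=0,1$ exactly when $w_x(0) = w_x(1) = 0$, while the PDE becomes $f\,w_t = (f w_x)_x$, i.e. $w_t = \mathcal{L}w := f^{-1}(f w_x)_x$ with homogeneous Neumann boundary conditions. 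The equilibrium $y=f$ corresponds to the constant $w \equiv 1$, which lies in the kernel of $\mathcal{L}$.

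Next I would set up $\mathcal{L}$ rigorously through its associated symmetric form $a(u,\phi) = \int_0^1 f\,u_x \phi_x \,dx$ on the form domain $H^1(0,1)$, working in the weighted Hilbert space $L^2_f(0,1)$ with inner product $\langle \cdot,\cdot\rangle_f$. Because $f \in W^{1,\infty}$ with $f \geq k$, the form is continuous ($|a(u,\phi)| \leq \|f\|_\infty \|u_x\|_2\|\phi_x\|_2$) and coercive up to a shift ($a(u,u) \geq k\|u_x\|_2^2$), so it defines a nonpositive self-adjoint operator $\mathcal{L}$ that generates an analytic, hence $C_0$, semigroup on $L^2_f$; this yields the existence and uniqueness of the mild solution. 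The divergence/flux form is essential here, since $f$ is only Lipschitz and the non-divergence coefficient $v_x$ need not exist. Integrating by parts against the Neumann condition gives $\langle \mathcal{L}u, u\rangle_f = -\int_0^1 f\,u_x^2\,dx \leq 0$, and since the embedding $H^1(0,1) \hookrightarrow L^2_f(0,1)$ is compact, $\mathcal{L}$ has compact resolvent and a discrete spectrum $0 = \mu_0 > \mu_1 \geq \mu_2 \geq \cdots$. The eigenvalue $0$ is simple with eigenfunction $w\equiv 1$, because $\langle \mathcal{L}u,u\rangle_f = 0$ forces $u_x = 0$. Setting $\lambda := -\mu_1 > 0$, the spectral theorem gives the contraction estimate $\|e^{t\mathcal{L}}g\|_f \leq e^{-\lambda t}\|g\|_f$ for every $g$ orthogonal to the constants in $L^2_f$; equivalently, the same rate follows from a weighted Poincar\'e inequality combined with the energy identity $\frac{d}{dt}\|w-1\|_f^2 = -2\int_0^1 f\,w_x^2\,dx$.

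Finally I would transfer the estimate back to $y$ in the unweighted norm. Mass conservation, $\frac{d}{dt}\int_0^1 y\,dx = [\,y_x - vy\,]_0^1 = 0$ together with $\int_0^1 y_0 = \int_0^1 f = 1$, shows that the weighted mean of $w_0 = y_0/f$ equals $1$, so $g := w_0 - 1$ is $L^2_f$-orthogonal to the constants; since $w(t) - 1 = e^{t\mathcal{L}}g$, we get $\|w(t)-1\|_f \leq e^{-\lambda t}\|w_0 - 1\|_f$. Undoing the substitution, $y(t) - f = f\,(w(t)-1)$, and the two-sided bound $k \leq f \leq \|f\|_\infty$ gives the comparisons $\|y(t)-f\|_2^2 \leq \|f\|_\infty \|w(t)-1\|_f^2$ and $\|w_0-1\|_f^2 = \int_0^1 (y_0-f)^2/f\,dx \leq k^{-1}\|y_0-f\|_2^2$, which combine to the claimed estimate with $M = (\|f\|_\infty/k)^{1/2} \geq 1$. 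I expect the main obstacle to be the low regularity of $f$: one must work entirely in divergence/flux form, justify self-adjointness and the natural Neumann boundary condition via the form method rather than classical $H^2$ elliptic regularity, and secure the strict spectral gap $\lambda > 0$ (equivalently the weighted Poincar\'e inequality), which is precisely what upgrades asymptotic convergence to the exponential rate asserted in the theorem.
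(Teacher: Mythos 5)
Your proposal is correct and follows essentially the same route as the paper's argument: exploiting the gradient structure of $v=f_x/f$ to realize the generator as a self-adjoint, negative semi-definite operator on a weighted $L^2$ space (your conjugation $w=y/f$ is the unitary equivalence behind the paper's statement that the operator is self-adjoint on the weighted space), identifying $0$ as a simple principal eigenvalue with the equilibrium as eigenfunction, and converting the resulting spectral gap into the exponential estimate with $M=(\|f\|_\infty/k)^{1/2}$ via the equivalence of the weighted and unweighted norms. Your write-up is in fact a more detailed and self-contained implementation (form method, compact resolvent, weighted Poincar\'e inequality) of the sketch the paper gives with references.
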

The above result is a straightforward corollary of a result well-known in mathematics and physics literature; 
namely, that if $v$ is the gradient of a potential function $\phi$, then $e^{\phi}$ is the unique stationary distribution of the stochastic process \eqref{eq:SDE}. From this, if one observes that $f_x/f = (ln ~ f)_x$, then it follows that any weakly differentiable function with a strictly positive lower bound can be written as an exponential distribution, $f = e^{(ln ~ f)}$. Hence, any function, $f$, that is at least once weakly differentiable can be then designed to be the equilibrium distribution of the Fokker-Planck equation by choosing the vector field to be the gradient of the potential function, $ln f$. See remarks in \cite{markowich2000trend},  \cite{ambrosio2009existence} and \cite{bakry2013analysis}. There are multiple ways to establish the stability result. One such approach is using optimal transport methods. Using this approach the above result follows when certain convexity conditions on $f$ are satisfied and the PDE, \eqref{eq:Mainsys1}, is framed as a gradient flow for an appropriate energy functional on the 2-Wasserstein space \cite{villani2008optimal}. This is shown in \cite{markowich2000trend} and \cite{ambrosio2009existence}. For more general $f$ as stated in the above theorem, the result follows using classical functional analytic methods, which might not sufficient when the domain of the PDE is unbounded or infinite dimensional as in the works  \cite{markowich2000trend} and  \cite{ambrosio2009existence} respectively. Particularly, the operator $(\cdot)_{xx} - (\cdot ~ \frac{f_x}{f} )_x$ with the zero-flux boundary condition, is a self-adjoint, negative semi-definite, closed and densely defined operator on the weighted space $L^2_f(0,1)$. $L^2_f(0,1)$ is isomorphic to the space $L^2(0,1)$ due to the upper and lower bounds on $f$. Hence, the spectrum of the generator,  $(\cdot)_{xx} - (\frac{f_x}{f} )_x$, lies on the real line and is bounded above by the principal eigenvalue. $0$ is an eigenvalue that has a positive eigenvector, $f$.  This implies that $0$ is that principal eigenvalue.  Principal eigenvalues of elliptic operators are simple and therefore the the rest of the spectrum lies to the left of some negative number, $-\lambda$. From this the result on expoenential stability of the steady state solution $f$ follows. See also \cite{bakry2013analysis} for more details to establish the spectral gap, $\lambda$, using Poincare inequalities. 

Next, we collect some preliminary results that will be needed to prove our main theorem.

\begin{restatable}[]{theorem}{fta}
\label{thm:FTA}
Consider the PDE, 
\begin{eqnarray}
\nonumber
&y_t = ay_{xx} &~~ in  ~~ (0,1) \times [0,T] \\ \nonumber
&y(\cdot,0) = y_0 &~~ in ~~ (0,1)  \\ \nonumber
&y_x(0,\cdot) =y_x(1,\cdot)= 0  &~~ in ~~ [0,T] \\ 
\label{eq:varsys1}
\end{eqnarray}
Let $a \in W^{2,\infty}(0,1)$ be such that $a \geq e$ for some strictly positive constant $e$. Let $y_0 \in L^2(0,1)$. Then a unique mild solution $y \in C([0,T];L^2(0,1))$ of the above PDE exists. Additionally, if there exists a positive constant, $c>0$, such that $y_0
 \geq c$, then $y(\cdot,t) 
 \geq c$ for each $t \in [0, \infty)$.
\end{restatable}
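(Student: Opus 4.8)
The plan is to recast \eqref{eq:varsys1} as an abstract Cauchy problem $\dot y = Ay$, $y(0)=y_0$, governed by the operator $Au = au_{xx}$ with domain $\mathcal{D}(A) = \lbrace u \in H^2(0,1) : u_x(0)=u_x(1)=0 \rbrace$, but to analyze it not on $L^2(0,1)$ rather on the weighted space $L^2_{1/a}(0,1)$. The weight $1/a$ is the natural choice: since $a$ is bounded below by $e$ and above (as $a \in W^{2,\infty}(0,1) \subset L^{\infty}(0,1)$), integration by parts against the zero-flux boundary condition gives, for $u,v \in \mathcal{D}(A)$,
\begin{equation}
\langle Au, v\rangle_{1/a} = \int_0^1 u_{xx}\,v \, dx = -\int_0^1 u_x v_x \, dx,
\end{equation}
which is symmetric in $u,v$ and nonpositive when $v=u$.

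First I would make this rigorous through the theory of quadratic forms. The symmetric, nonnegative, densely defined bilinear form $\mathcal{E}(u,v) = \int_0^1 u_x v_x \, dx$ with form domain $H^1(0,1)$ is closed on $L^2_{1/a}(0,1)$, and its associated self-adjoint operator is exactly $-A$ above, the domain being identified by elliptic regularity. Because $A$ is self-adjoint and negative semi-definite, the spectral theorem (equivalently, the Lumer--Phillips theorem) shows that $A$ generates a $C_0$-semigroup of contractions $(T(t))_{t\geq 0}$ on $L^2_{1/a}(0,1)$. Since $a$ is bounded above and below, the norms of $L^2_{1/a}(0,1)$ and $L^2(0,1)$ are equivalent, so $(T(t))_{t\geq 0}$ is also a $C_0$-semigroup on $L^2(0,1)$; this yields the unique mild solution $y(\cdot,t)=T(t)y_0 \in C([0,T];L^2(0,1))$ and settles the first claim.

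For the lower bound I would reduce matters to positivity preservation of the semigroup. Since any constant function satisfies both $a c_{xx}=0$ and the zero-flux boundary condition, the shift $w := y - c$ solves the same PDE with initial datum $w_0 = y_0 - c \geq 0$, so it suffices to prove that $w_0 \geq 0$ forces $w(\cdot,t)\geq 0$. To this end I would test the weak form of the equation with the negative part $w^- := \max(-w,0)$, using the Stampacchia chain rule $(w^-)_x = -w_x \mathbf{1}_{\lbrace w<0\rbrace}$, to obtain the energy identity
\begin{equation}
\frac{d}{dt}\,\tfrac12\|w^-(\cdot,t)\|_{1/a}^2 = -\int_0^1 |(w^-)_x|^2 \, dx \;\leq\; 0 .
\end{equation}
As $w^-(\cdot,0)=0$, this forces $\|w^-(\cdot,t)\|_{1/a}=0$ for all $t$, i.e. $y(\cdot,t)\geq c$.

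The main obstacle is regularity: a mild solution is a priori only continuous into $L^2(0,1)$, so the pointwise-in-time manipulation of $w^-$ and the differentiation of $\|w^-\|_{1/a}^2$ are not immediately licit. I would resolve this by exploiting the fact that a self-adjoint, semibounded generator produces an analytic semigroup, so that $y(\cdot,t)\in\mathcal{D}(A)$ and $t\mapsto y(\cdot,t)$ is differentiable for every $t>0$; the energy identity then holds on $[\varepsilon,T]$ and one passes to the limit $\varepsilon\to 0^+$ by continuity, the initial sign being inherited from $w_0$. Alternatively, one may approximate $y_0 \in L^2(0,1)$ by data in $\mathcal{D}(A)$ and argue by density, or invoke the Beurling--Deny criteria to conclude directly that the Dirichlet form $\mathcal{E}$ generates a positivity-preserving semigroup.
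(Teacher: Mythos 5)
Your proof is correct, and its overall architecture matches the paper's: generate the semigroup by form methods, establish that it preserves positivity, and then exploit the fact that constants are fixed points of the evolution (since $a\,c_{xx}=0$ and constants satisfy the Neumann condition) to translate positivity preservation into preservation of the lower bound $c$. The execution differs in two respects that are worth noting. First, the paper stays on the unweighted space $L^2(0,1)$ and works with the non-symmetric form $b(u,\phi)=\langle a u_x,\phi_x\rangle_2+\langle a_x u_x,\phi\rangle_2$ obtained from the product-rule rewriting $\tilde A_a u=(au_x)_x-a_xu_x$, invoking accretivity, closedness and continuity of $b$ together with a cited criterion of Ouhabaz to obtain generation and positivity preservation in one stroke; you instead pass to the weighted space $L^2_{1/a}(0,1)$, where the same operator becomes self-adjoint and negative semi-definite with the weight-free Dirichlet form $\int_0^1 u_xv_x\,dx$, recovering the $L^2$ statement by norm equivalence. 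Your symmetrization buys self-adjointness, contractivity and analyticity for free and avoids having to verify accretivity of the drift-like term $\langle a_xu_x,\phi\rangle_2$; it is the same device the paper uses informally elsewhere (in the discussion following Theorem~\ref{asymp}) but not in this proof. Second, where the paper takes positivity preservation as a black box from the cited reference, you prove it directly by the Stampacchia truncation argument, testing against $w^-$ and handling the regularity needed to differentiate $t\mapsto\|w^-(\cdot,t)\|_{1/a}^2$ via analyticity of the semigroup on $(0,T]$ plus continuity at $t=0$ (or, as you note, by appealing to the Beurling--Deny criteria, which is essentially the route the cited reference formalizes). Both arguments are sound; yours is more self-contained, the paper's is shorter.
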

\begin{proof}
See appendix.
\end{proof}

Now, we make some definitions which will be used subsequently. Given $a \in W^{2,\infty}(0,1)$ such that $a  \geq e$ for some strictly positive constant $e$ we define the operator, $A_a :\mathcal{D}(A_a)  \rightarrow  L^2(0,1)$, by 
\begin{equation}
\label{eq:Clpgen}
A_au = (au)_{xx}
\end{equation} 
for each $u \in {D}(A_a) =\lbrace w \in H^2(0,1); (aw_x)(0)=(aw)_x(1)=0 \rbrace$. 

\begin{corollary}
\label{CorII}
Let $a \in W^{2,\infty}(0,1)$ be such that $a \geq e$ for some strictly positive constant $e$. Consider the PDE, 
\begin{eqnarray}
\nonumber
&y_t = (ay)_{xx} &~~ in  ~~ (0,1) \times [0,T] \\ \nonumber
&y(\cdot,0) = y_0 &~~ in ~~ (0,1)  \\ \nonumber
&(ay)_x(0,\cdot) =(ay)_x(1,\cdot)= 0  &~~ in ~~ [0,T] \\ 
\label{eq:cllp1}
\end{eqnarray}

Let $y_0 \in L^2({\Omega})$. Then a unique mild solution $y \in C([0,T];L^2(0,1))$ of the above PDE exists. Additionally, if there exists a positive constant, $c>0$, such that $y_0 
 \geq c$, then $y(\cdot,t) 
 \geq d$ for each $t \in [0, \infty)$, for some strictly positive constant $d$.
\end{corollary}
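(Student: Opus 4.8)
The plan is to reduce Corollary \ref{CorII} to Theorem \ref{thm:FTA} by means of the change of variables $z = ay$, implemented at the operator level as a similarity transform. First I would introduce the multiplication operator $M_a : L^2(0,1) \to L^2(0,1)$, $M_a u = au$. Because $e \leq a(x) \leq \|a\|_\infty$ for almost every $x$, the operator $M_a$ is a bounded linear isomorphism of $L^2(0,1)$ with bounded inverse $M_a^{-1}u = u/a$. Moreover $a \in W^{2,\infty}(0,1)$ together with the lower bound $a \geq e$ gives $1/a \in W^{2,\infty}(0,1)$, so $M_a$ and $M_a^{-1}$ both map $H^2(0,1)$ into itself; hence $M_a$ restricts to an isomorphism of $H^2(0,1)$ onto itself.

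Next I would verify that setting $z = ay$ converts \eqref{eq:cllp1} into \eqref{eq:varsys1}. Since $a$ is time-independent, $z_t = a y_t = a(ay)_{xx} = a z_{xx}$; the flux conditions $(ay)_x(0,\cdot) = (ay)_x(1,\cdot) = 0$ become the Neumann conditions $z_x(0,\cdot) = z_x(1,\cdot) = 0$; and the initial datum becomes $z(\cdot,0) = a y_0 \in L^2(0,1)$, using that $a$ is bounded. At the operator level, let $B$ denote the generator $Bw = a w_{xx}$ associated with Theorem \ref{thm:FTA} and let $A_a$ be the operator \eqref{eq:Clpgen}. I would check that $M_a$ maps $\mathcal{D}(A_a)$ bijectively onto $\mathcal{D}(B)$ and that $B M_a = M_a A_a$ on $\mathcal{D}(A_a)$, so that $A_a = M_a^{-1} B M_a$. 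Consequently the semigroup generated by $A_a$ equals $M_a^{-1}$ times the semigroup generated by $B$, and $y$ is the unique mild solution of \eqref{eq:cllp1} if and only if $z = M_a y$ is the unique mild solution of \eqref{eq:varsys1}. Existence and uniqueness of $y \in C([0,T];L^2(0,1))$ then follow by applying Theorem \ref{thm:FTA} to $z$ and using the continuity of $M_a^{-1}$.

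For the positivity claim, suppose $y_0 \geq c > 0$. Then $z_0 = a y_0 \geq ec > 0$ since $a \geq e$, so Theorem \ref{thm:FTA} yields $z(\cdot,t) \geq ec$ for all $t$. Dividing by $a$ and using $a \leq \|a\|_\infty$, I would obtain $y(\cdot,t) = z(\cdot,t)/a \geq ec/\|a\|_\infty =: d > 0$ for each $t$, which is the asserted strictly positive lower bound.

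I expect the only real subtlety to be justifying the change of variables at the level of mild solutions rather than classical ones. This reduces to showing that $M_a$ carries the integral identity \eqref{eq:mldsol} for $B$ into the corresponding identity for $A_a$; that step in turn rests on two facts, namely that $M_a$ commutes with the Bochner integral $\int_0^t (\cdot)\,ds$ (immediate, as $M_a$ is a fixed bounded operator), and that it intertwines the two generators on their domains via $B M_a = M_a A_a$. Once the intertwining and the domain correspondence are established, the remaining arguments are routine.
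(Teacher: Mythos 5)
Your proposal is correct and is essentially the paper's own argument: the paper likewise proves Corollary \ref{CorII} by the substitution $u = ay$, which converts \eqref{eq:cllp1} into the PDE of Theorem \ref{thm:FTA} with Neumann boundary conditions. You simply supply more detail (the boundedness and invertibility of the multiplication operator, the intertwining of generators, and the explicit constant $d = ec/\|a\|_\infty$), all of which is sound.
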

\begin{proof}
Both the existence of mild solutions of the PDE, \eqref{eq:cllp1}, and the lower bound on solutions follow from theorem \ref{thm:FTA} by noting that the coordinate transformation, $u=ay$, transforms the PDE \eqref{eq:cllp1}, to the PDE,
\begin{eqnarray}
\nonumber
&u_t = au_{xx} &~~ in  ~~ (0,1) \times [0,T] \\ \nonumber
&u(\cdot,0) = u_0 &~~ in ~~ (0,1)  \\ \nonumber
&u_x(0,\cdot) =u_x(1,\cdot)= 0  &~~ in ~~ [0,T] \\ 
\end{eqnarray}
\end{proof}

In the following lemma, we will establish some estimates on the rate of convergence of the solution, $y$, of the PDE  \eqref{eq:cllp1}, assuming the initial condition is regular enough. 

\begin{lemma}
\label{cvglma}
Let $y_0 \in \mathcal{D}(A_a)$ be such that $y_0 \geq 0$. Additionally, assume $a = 1/f$  , where $f \in W^{2,\infty}(0,1)$ such that $f \geq c>0$ for some constant  $c$, and $\int_0^1f(x)dx = \int_0^1y_0(x)dx$. Then the mild solution, $y \in C([0,\infty);L^2(0,1))$, of the PDE, \eqref{eq:cllp1}, satisfies $y(\cdot,t) \in \mathcal{D}(A_a) $ for each  $t \in [0,\infty)$. Moreover, the following estimates hold:
\begin{eqnarray}
\|y(\cdot,t)-f\|_{2} &\leq M_0 e^{-\lambda t} \\ \label{eq:expcn1}
\|A_ay(\cdot,t)\|_{2} &\leq M_1 e^{-\lambda t} \label{eq:expcn2}
\end{eqnarray}
for some strictly positive constants $M_0$, $M_1$ and $\lambda$.
\end{lemma}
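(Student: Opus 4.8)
The plan is to realize the mild solution of \eqref{eq:cllp1} as $y(\cdot,t) = e^{A_a t}y_0$, where $e^{A_a t}$ is the $C_0$-semigroup generated by $A_a$ (whose existence and domain structure underlie Theorem \ref{thm:FTA} and Corollary \ref{CorII}), and to extract exponential decay from the spectral structure of $A_a$. The key preliminary observation is that $f$ is a steady state: since $a = 1/f$ gives $af \equiv 1$, we have $A_a f = (af)_{xx} = 0$, and $f \in \mathcal{D}(A_a)$ because $(af)_x \equiv 0$ trivially satisfies the zero-flux boundary conditions. Thus $f \in \ker A_a$, and since $f \geq c > 0$ it is a strictly positive equilibrium.

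The analytical heart is to show that $A_a$ is self-adjoint and negative semi-definite on the weighted space $L^2_a(0,1)$, with $0$ as a simple, isolated principal eigenvalue. Integrating by parts twice, using $(ay)_x(0)=(ay)_x(1)=0$ and the identity $\langle A_a y, w\rangle_{a} = \int_0^1 (ay)_{xx}(aw)\,dx$ (the weight $a$ combines with $w$), I would obtain $\langle A_a y, w\rangle_{a} = -\int_0^1 (ay)_x(aw)_x\,dx = \langle y, A_a w\rangle_{a}$ for $y,w \in \mathcal{D}(A_a)$, and in particular $\langle A_a y, y\rangle_{a} = -\|(ay)_x\|_2^2 \leq 0$. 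Because $\mathcal{D}(A_a) \hookrightarrow H^2(0,1)$ embeds compactly into $L^2(0,1)$, the operator $A_a$ has compact resolvent and hence a discrete real spectrum contained in $(-\infty,0]$. Exactly as in the discussion following Theorem \ref{asymp}, the principal eigenvalue $0$ has the strictly positive eigenfunction $f$ and is therefore simple, so the remaining spectrum lies in $(-\infty,-\lambda]$ for some $\lambda>0$; this spectral gap is what drives the decay.

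With the gap in hand, I would decompose $y_0 = f + z_0$, where $f$ is the $L^2_a$-orthogonal projection of $y_0$ onto $\ker A_a = \mathrm{span}\{f\}$. The mass hypothesis $\int_0^1 y_0 = \int_0^1 f$ forces this projection coefficient to be exactly one, since $\langle y_0, f\rangle_{a} = \int_0^1 y_0\,dx = \int_0^1 f\,dx = \langle f, f\rangle_{a}$. As $A_a f = 0$, the semigroup fixes $f$, so $y(\cdot,t) - f = e^{A_a t}z_0$ with $z_0$ orthogonal to $\ker A_a$ in $L^2_a$; self-adjointness then gives $\|e^{A_a t}z_0\|_{L^2_a} \leq e^{-\lambda t}\|z_0\|_{L^2_a}$. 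Passing to the unweighted norm through the equivalence $\|g\|_2 \leq \sqrt{\|f\|_\infty}\,\|g\|_{L^2_a}$ (valid because $c \leq f \leq \|f\|_\infty$) yields the first estimate with $M_0 = \sqrt{\|f\|_\infty}\,\|z_0\|_{L^2_a}$. For the second estimate I would use $A_a y(\cdot,t) = e^{A_a t}A_a y_0$ together with $\langle A_a y_0, f\rangle_{a} = \langle y_0, A_a f\rangle_{a} = 0$, so that $A_a y_0$ is also orthogonal to $\ker A_a$ and decays at the same rate, giving $\|A_a y(\cdot,t)\|_2 \leq \sqrt{\|f\|_\infty}\,e^{-\lambda t}\|A_a y_0\|_{L^2_a} =: M_1 e^{-\lambda t}$. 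Finally, $y(\cdot,t) \in \mathcal{D}(A_a)$ for all $t$ follows from the invariance of the generator's domain under its semigroup, $e^{A_a t}\mathcal{D}(A_a)\subseteq \mathcal{D}(A_a)$, since $y_0 \in \mathcal{D}(A_a)$.

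The main obstacle I anticipate is the spectral step: rigorously justifying self-adjointness on the weighted space $L^2_a$, the compactness of the resolvent, and, most importantly, the simplicity and isolation of the principal eigenvalue $0$, which is what produces the strictly positive gap $\lambda$. Once the gap is secured, the splitting into the kernel and its orthogonal complement, the role of the mass condition in pinning the kernel component to $f$, and the conversions between $\|\cdot\|_{L^2_a}$ and $\|\cdot\|_2$ are all routine; note that the hypothesis $y_0 \geq 0$ is not actually needed for the decay estimates themselves, only for the probabilistic interpretation.
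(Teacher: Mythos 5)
Your proposal is correct, and it reaches the two estimates by a somewhat different route than the paper. For the first estimate the paper simply cites an earlier result (\cite{elamvazhuthi2016coverage}, Theorem IV.4) asserting that $f$ is the eigenfunction of $A_a$ for the simple principal eigenvalue $0$ with the rest of the spectrum to the left of $-\lambda$; you instead carry out the spectral analysis yourself, verifying symmetry and negative semi-definiteness of $A_a$ on $L^2_a(0,1)$ via the identity $\langle A_a y, w\rangle_a = -\int_0^1 (ay)_x (aw)_x\,dx$, invoking compactness of the resolvent, and using the mass condition to pin the kernel component of $y_0$ to exactly $f$ (your computation $\langle y_0,f\rangle_a = \int_0^1 y_0\,dx$ is the right way to see this). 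For the second estimate the two arguments genuinely diverge: the paper passes to the graph-norm space $(\mathcal{D}(A_a),\|\cdot\|_g)$ and uses the similarity $T_1(t) = (I-A_a)^{-1}T(t)(I-A_a)$ to transfer the growth bound, whereas you use the commutation $A_a e^{A_a t}y_0 = e^{A_a t}A_a y_0$ together with the observation that $\langle A_a y_0, f\rangle_a = \langle y_0, A_a f\rangle_a = 0$, so $A_a y_0$ already lies in the spectral subspace where the semigroup decays at rate $e^{-\lambda t}$. Your version is arguably cleaner on this point, since the paper's similarity argument only directly yields equality of the (zero) growth bounds of $T$ and $T_1$ and still implicitly requires restricting to the complement of the kernel in the graph-norm space; your orthogonality step makes that restriction explicit. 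The only place where you, like the paper, lean on an unproved standard fact is the simplicity and isolation of the principal eigenvalue $0$, which you correctly flag as the analytical crux; the paper defers this to the discussion following Theorem \ref{asymp} and to the cited reference, so the level of rigor is comparable. Your closing remark that $y_0 \geq 0$ is not needed for the decay estimates is also consistent with the paper's proof, which never uses it.
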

\begin{proof}
The first estimate is just a restatement of \cite{elamvazhuthi2016coverage}[Theorem IV.4], where it was shown that $f$ is the eigenvector of $A_a$ corresponding to simple principal eigenvalue $0$, and the rest of spectrum is in the left-half complex plane, left to some negative number $-\lambda$. For this estimate, the assumption of the regularity of the initial condition is not required.

For the second estimate, we consider the space $(\mathcal{D}(A)_a, \|\cdot \|_g)$ where $\| \cdot \|_g$ is the graph norm given by 
\begin{equation}
\| z \|_g = \|z\|_2 +\|A_az\|_2
\end{equation}
for each $z \in \mathcal{D}(A_a)$. $A_a$ generates a semigroup of operators, $(T_1(t))_{t\geq 0}$, on the space, $(\mathcal{D}(A_a), \|\cdot \|_g)$. Let $(T(t))_{t\geq 0}$ be the semigroup of operators generated on the space, $L^2(0,1)$ by $A_a$. $T_1(t)$ is the restriction of the operator, $T(t)$, for each $t \in [0, \infty)$, on the space $(\mathcal{D}(A_a), \|\cdot \|_g)$. $T_1(t) = (I-A_a)^{-1}T(t)(I-A_a)$ for each $t \in[0,\infty)$. It follows that $T_1(t)$ and $T(t)$ are similar for each $t \in[0,\infty)$. Hence, the spectrum and growth bounds of $(T_1(t))_{t \geq 0}$ and $(T(t))_{t \geq 0}$ are the same. Hence, it follows that we have 
\begin{equation}
\|y(\cdot,t)-f\|_g \leq M_1 e^{-\lambda t} 
\end{equation}
for some $M_1>0$ and for all $t \in [0, \infty)$. This implies the estimate, \eqref{eq:expcn2}, since $A_a f=0$.
\end{proof}

Controllability of the PDE, \eqref{eq:Mainsys1}, will initially be established with some assumptions on  regularity conditions and lower bounds satisfied by the initial conditions (lemma \ref{prlma}). The next two results will help us relax these assumptions further ahead in the main theorem \ref{maintheo}.

\begin{restatable}[]{theorem}{ftak}
\label{thm:FTA2}
Let $y_0 \in L^2(0,1)$ and $f \in W^{2,\infty}(0,1)$ be such that  $f \geq k> 0$ for some positive constant $k$. Suppose $v(x,t) =  f_x(x)/f(x)$  for all $t \in [0, \infty)$ and almost every $x \in (0, 1)$ in the PDE, \eqref{eq:Mainsys1},
If there exists a positive constant, $c>0$, such that $y_0\geq c$, then the unique mild solution of the PDE satisfies the estimate $y(\cdot,t) 
 \geq d$ for each $t \in [0, \infty)$, and for some positive constant, $d$.

Moreover, we have that $\mathcal{D}(B_f) =\mathcal{D}(A_a)$, whenever $a = 1/f$ and $A_a$ is the operator defined in equation \eqref{eq:Clpgen} and $B_f :\mathcal{D}(B_f) \rightarrow L^2(0,1)$ is the operator given by
\begin{equation}
B_fu = u_{xx} - (\frac{f_x}{f}u)_x
\end{equation}
for each $u \in {D}(B_f) =\lbrace w \in H^2(0,1); (w_x-\frac{f_x}{f}w)(0)=(w_x-\frac{f_x}{f}w)(1)=0 \rbrace$. 
\end{restatable}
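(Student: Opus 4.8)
I would prove the two assertions separately, handling the domain identity (the second claim) first because it is purely algebraic, and then the pointwise lower bound (the first claim), which needs a comparison argument. Throughout I would use the standing observation that, since $f \in W^{2,\infty}(0,1)$ with $f \geq k > 0$, the reciprocal $a = 1/f$ also lies in $W^{2,\infty}(0,1)$ (its derivatives $-f_x/f^2$ and $2f_x^2/f^3 - f_{xx}/f^2$ are essentially bounded), so that multiplication by $f$ and by $a$ are bounded, mutually inverse operators on $H^2(0,1)$, and the drift coefficient $f_x/f$ lies in $L^\infty(0,1)$.

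For the domain identity, the bounded invertibility of $w \mapsto fw$ on $H^2(0,1)$ shows at once that $w \in H^2(0,1)$ if and only if $aw = w/f \in H^2(0,1)$, which reconciles the interior regularity demanded by $\mathcal{D}(A_a)$ and $\mathcal{D}(B_f)$. It then remains to match the boundary conditions, and this I would do by the direct computation
\[
(aw)_x = \Big(\tfrac{w}{f}\Big)_x = \frac{1}{f}\Big(w_x - \frac{f_x}{f}\,w\Big).
\]
Since $f(0),f(1) \geq k > 0$, the conditions $(aw)_x(0) = (aw)_x(1) = 0$ are equivalent to $(w_x - \tfrac{f_x}{f}w)(0) = (w_x - \tfrac{f_x}{f}w)(1) = 0$, which are exactly the conditions defining $\mathcal{D}(B_f)$. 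Hence $\mathcal{D}(A_a) = \mathcal{D}(B_f)$.

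For the lower bound, the plan is to strip off the divergence-form drift by the substitution $u = y/f$ (equivalently $y = fu$), which is legitimate for the mild solution because multiplication by $f$ and $1/f$ are isomorphisms of the relevant spaces. A short calculation gives $B_f(fu) = (f u_x)_x$, so $u$ solves the non-divergence drift-diffusion equation $u_t = u_{xx} + (f_x/f)\,u_x$ on $(0,1)$, and the flux boundary condition of \eqref{eq:Mainsys1} becomes the homogeneous Neumann condition $u_x(0,\cdot) = u_x(1,\cdot) = 0$ (indeed $y_x - \tfrac{f_x}{f}y = f u_x$). This equation is uniformly parabolic, has bounded coefficients, and has no zeroth-order term, so constants are stationary solutions. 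A weak minimum principle — obtained as in the proof of Theorem \ref{thm:FTA} by testing the equation for $u-m$ against its negative part $(u-m)^-$, using the Neumann condition to eliminate the boundary terms and Young's and Gronwall's inequalities to absorb the drift contribution — then gives $u(\cdot,t) \geq \operatorname{ess\,inf}_x u_0$ for all $t$. Because $y_0 \geq c$ and $f \leq \|f\|_\infty$, we have $\operatorname{ess\,inf}_x u_0 \geq c/\|f\|_\infty > 0$, so $y(\cdot,t) = f\,u(\cdot,t) \geq k\,c/\|f\|_\infty =: d > 0$ for every $t$.

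The main obstacle lies entirely in the lower-bound part: making the minimum principle rigorous at the level of mild solutions, where $u_0$ need not be smooth and the first-order term $(f_x/f)u_x$ must be controlled. Parabolic smoothing supplies enough regularity for $t>0$ to justify the energy estimate, and the bounded drift is absorbed via Young's inequality, so no essentially new machinery beyond that already used for Theorem \ref{thm:FTA} is required. By contrast, the domain identity reduces to the one-line computation above once the $W^{2,\infty}$-invertibility of multiplication by $f$ is recorded.
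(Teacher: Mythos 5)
Your proposal is correct, and for the domain identity it is exactly the paper's argument: the quotient rule at the endpoints, $(aw)_x=\frac{1}{f}\bigl(w_x-\frac{f_x}{f}w\bigr)$ together with $f\geq k>0$, converts one boundary condition into the other, and multiplication by $f$ and $1/f$ preserves $H^2(0,1)$ since $f\in W^{2,\infty}(0,1)$ is bounded away from zero. For the lower bound the paper gives no details beyond ``the same line of arguments as in Theorem~\ref{thm:FTA}'', which there means: the accretive, closed, continuous bilinear form generates a positivity-preserving semigroup (citing Ouhabaz), and a positive stationary solution (there the constant $\mathbf{1}$, here $f$ itself) is split off from the initial datum by linearity, yielding $y_0\geq \frac{c}{\|f\|_\infty}f$ and hence $y(\cdot,t)\geq \frac{c}{\|f\|_\infty}f\geq \frac{ck}{\|f\|_\infty}=:d$. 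You reach the same constant $d$ by first conjugating with $f$ (setting $u=y/f$, so that constants become the stationary solutions of the Neumann problem $u_t=u_{xx}+\frac{f_x}{f}u_x$) and then proving the minimum principle by hand via an energy estimate on $(u-m)^-$ with Young and Gronwall, rather than by citing the form-theoretic positivity criterion. The two routes are equivalent; yours is more self-contained and makes explicit both why the constant degrades from $c$ to $d$ and where the boundedness of the drift $f_x/f$ enters, at the cost of having to justify the energy computation for mild solutions (which parabolic smoothing handles, as you note). Your conjugation step also mirrors the coordinate change the paper itself uses in Corollary~\ref{CorII}, so nothing in your argument is foreign to the paper's toolkit.
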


\begin{lemma}
\label{Dave}
Consider the heat equation with Neumann boundary condition, that is, $v \equiv 0$ in \eqref{eq:Mainsys1}. Let $y_0 \in L^2(0,1)$ be such that $y_0 \geq 0$. Let $y \in C([0,T];L^2(0,1))$ be the unique mild solution. Then for each $t \in (0,\infty)$ there exists a positive constant, $c_t>0$, such that $y(\cdot,t) \geq c_t$.
\end{lemma}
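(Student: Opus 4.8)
The plan is to represent the unique mild solution through the Neumann heat kernel and to exploit its strict positivity. First I would recall that, with $v\equiv 0$, the semigroup generated by the Neumann Laplacian on $(0,1)$ admits an explicit integral kernel $K(x,y,t)$ obtained by the method of images: reflecting the free Gaussian kernel $G(x,y,t)=(4\pi t)^{-1/2}\exp(-(x-y)^2/4t)$ evenly across the two endpoints yields
\[
K(x,y,t)=\sum_{k\in\mathbb{Z}}\big(G(x,y+2k,t)+G(x,-y+2k,t)\big),
\]
which satisfies the homogeneous Neumann boundary conditions and gives $y(x,t)=\int_0^1 K(x,y,t)\,y_0(y)\,dy$ for $t>0$. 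I would verify that this integral representation coincides with the mild solution furnished by the semigroup for $y_0\in L^2(0,1)$, since both solve the same autonomous equation with the same data and are unique.

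The key structural observation is that every summand above is a strictly positive Gaussian, so $K(x,y,t)>0$ for all $x,y\in[0,1]$ and all $t>0$. For a fixed $t>0$, the kernel $(x,y)\mapsto K(x,y,t)$ is continuous on the compact square $[0,1]\times[0,1]$, hence attains a strictly positive minimum $\kappa_t:=\min_{x,y\in[0,1]}K(x,y,t)>0$. Combining these facts, for each fixed $t>0$ and almost every $x\in(0,1)$,
\[
y(x,t)=\int_0^1 K(x,y,t)\,y_0(y)\,dy\ \geq\ \kappa_t\int_0^1 y_0(y)\,dy\ =\ \kappa_t\,m,
\]
where I use $y_0\geq 0$ to pass the lower bound through the integral and where $m:=\int_0^1 y_0(y)\,dy>0$ under the standing normalization $\int_0^1 y_0=1$. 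Note that this nontriviality is essential: if $y_0\equiv 0$ the claim is false. Setting $c_t:=\kappa_t m$ then proves the lemma.

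I expect the main obstacle to be the bookkeeping that legitimizes the kernel representation for merely $L^2$, nonnegative initial data, namely that the mild solution produced by the semigroup agrees with the convolution against $K$, together with the uniform-in-$(x,y)$ positive lower bound on $K$ at a fixed time. An alternative route avoids the explicit kernel: by parabolic smoothing the mild solution is smooth for $t>0$, mass is conserved (indeed $\frac{d}{dt}\int_0^1 y\,dx=y_x(1,t)-y_x(0,t)=0$, so $\int_0^1 y(\cdot,t)\,dx=m>0$), and the strong maximum principle together with Hopf's lemma at the Neumann boundary forbids an interior or boundary zero unless $y\equiv 0$; continuity on $[0,1]$ then yields the positive minimum $c_t$. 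I would present the kernel argument as primary, since it is constructive and quantitative, consistent with the rest of the paper.
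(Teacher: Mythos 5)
Your proposal is correct and follows essentially the same route as the paper: represent the mild solution by the Neumann heat kernel, establish a strictly positive lower bound on the kernel at each fixed $t>0$, and integrate against the nonnegative, unit-mass initial datum. The only difference is cosmetic --- the paper cites a quantitative Gaussian lower bound for $K$ (Li--Yau type) where you derive positivity from the method-of-images sum plus compactness --- and your explicit remark that $\int_0^1 y_0\,dx>0$ is needed is a point the paper leaves implicit.
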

\begin{proof}
The solution $y$ of the PDE \eqref{eq:Mainsys1}, can be represented using the {\it Neumann heat kernel}, $K$. That is, there exists a measurable map, $K:(0,\infty) \times [0,1]^2   \rightarrow [0,\infty)$ such that the mild solution, $y$, is related to $K$ by the following relation:
\begin{equation}
y(x,t) =  \int_0^1K(t,x,z)y_0(x)dz
\end{equation}
for each $t \in (0,\infty)$ and almost every $x \in (0,1)$. From
 \cite{li1986parabolic}[Corollary 2.1],
 we know that the Neumann heat kernel, $K$, satisfies the following lower bound:
\begin{equation}
K(t,x,z) \geq \frac{1}{(4\pi t)^{1/2}}exp(\frac{-(x-z)^2}{4t})
\end{equation}
for each $t>0$ and almost every $x,z \in (0,1)$. From this the lower bound on $y(\cdot,t)$ follows. \end{proof}
\begin{lemma}
\label{prlma}
Let $y_0 \in \mathcal{D}(A_a)$ be such that $y_0 \geq c $ for some strictly positive constant, $c$. Suppose $f \in W^{2, \infty}(0,1)$ such that $f\geq s$  for some strictly positive constant $s$, $a = 1/f$ and $\int_0^1f(x)dx = \int_0^1y_0(x)dx$. Let $T = \sum_{n=1}^{\infty} \frac{1}{n^2}$ be the final time. Define the vector field $v$ by
\begin{equation}
\label{eq:fbctrl}
v(\cdot,t) =   \frac{y_x}{y}- \alpha m \frac{(a y)_x}{y}
\end{equation}
 with $a = 1/f$ in \eqref{eq:Mainsys1} whenever $t \in [\sum_{n=1}^{m-1} \frac{1}{n^2},\sum_{n=1}^{m} \frac{1}{n^2})$ and $m \in \mathbb{Z}_+$. Here, we define $\sum_{n=1}^{m} \frac{1}{n^2} = 0$ if $m=0$.

Then there exists an $\alpha >0$  such that $v \in L^{\infty}(0,T;L^{\infty}(0,T))$ and the mild solution $y$ of the PDE, \eqref{eq:Mainsys1}, satisfies $y(\cdot,T) = f$.
\end{lemma}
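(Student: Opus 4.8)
The plan is to show that, under the feedback law \eqref{eq:fbctrl}, the closed-loop equation collapses on each time interval to the autonomous equation already analyzed in Corollary \ref{CorII}, so that the whole construction reduces to patching together exponentially contracting semigroups whose rates grow without bound. First I would substitute $v$ into \eqref{eq:Mainsys1}. On the interval $[\sum_{n=1}^{m-1}\tfrac{1}{n^2},\sum_{n=1}^{m}\tfrac{1}{n^2})$ one has $vy = y_x - \alpha m (ay)_x$, hence $y_t = y_{xx}-(vy)_x = \alpha m (ay)_{xx}$, and the flux $y_x-vy = \alpha m (ay)_x$ vanishes at $x=0,1$. Thus the dynamics on this interval are exactly those of \eqref{eq:cllp1} with coefficient scaled to $\alpha m\, a$, i.e.\ the generator is $\alpha m A_a$ with $A_a$ as in \eqref{eq:Clpgen}. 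Corollary \ref{CorII} then supplies existence, conservation of mass, and a strictly positive lower bound on each interval, while Lemma \ref{cvglma} guarantees that $y(\cdot,t)\in\mathcal{D}(A_a)$ is propagated. The solution on $[0,T)$ is built by taking the terminal value of one interval as the initial value of the next; this is admissible in the piecewise-constant-operator mild-solution framework of the Preliminaries, since the partition $\{\sum_{n=1}^{m}\tfrac1{n^2}\}$ accumulates precisely at $T=\sum_{n=1}^{\infty}\tfrac1{n^2}$.

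The heart of the argument is the terminal identity $y(\cdot,T)=f$. I would work in the weighted space $L^2_a$ with $a=1/f$, on which a direct integration by parts shows $A_a$ is self-adjoint and negative semidefinite, with $f$ spanning its kernel (principal eigenvalue $0$) and the rest of the spectrum bounded above by $-\lambda<0$. Conservation of mass gives $\langle y(\cdot,t)-f,f\rangle_a=\int_0^1(y-f)\,dx=0$, so the error remains in the spectral subspace where the gap is available. The crucial point is that self-adjointness makes the spectral calculus yield a contraction constant of exactly $1$: over the $m$-th interval, of length $\tfrac1{m^2}$ and with rate $\alpha m\lambda$, both $\|y-f\|_a$ and $\|A_a(y-f)\|_a$ are multiplied by at most $e^{-\alpha m\lambda/m^2}=e^{-\alpha\lambda/m}$. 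Multiplying across intervals gives $\|y(\cdot,\sum_{n\le m}\tfrac1{n^2})-f\|_a\le e^{-\alpha\lambda H_m}\|y_0-f\|_a$ with $H_m=\sum_{k=1}^m\tfrac1k$. Because the harmonic series diverges, $H_m\to\infty$ and the error tends to $0$ for every $\alpha>0$; by the norm equivalence of $\|\cdot\|_a$ and $\|\cdot\|_2$ (from $0<s\le f\le\|f\|_\infty$) and the continuity of the mild solution, $\lim_{t\to T^-}\|y(\cdot,t)-f\|_2=0$, which yields $y(\cdot,T)=f$.

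Finally I would verify that $v$ is bounded in space and time, i.e.\ $v\in L^\infty(0,T;L^\infty(0,1))$. The bound $y\ge d>0$ holds uniformly: on the finitely many early intervals it comes from Corollary \ref{CorII}, and on the tail from the uniform convergence $y\to f\ge s$. The term $\tfrac{y_x}{y}$ converges to $\tfrac{f_x}{f}$, which is bounded since $f_x\in L^\infty$ and $f\ge s$. For the remaining term I would exploit $af\equiv1$, so that $(ay)_x=(a(y-f))_x$, and bound its supremum by $\|a\|_{W^{1,\infty}}\|y-f\|_{C^1}$; elliptic regularity $\|w\|_{H^2}\lesssim\|w\|_2+\|A_a w\|_2$ together with the one-dimensional embedding $H^2(0,1)\hookrightarrow C^1([0,1])$ controls $\|y-f\|_{C^1}$ by $\|y-f\|_2+\|A_a y\|_2\lesssim e^{-\alpha\lambda H_{m-1}}$. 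Using $H_{m-1}\ge\ln m$, the second term is bounded on the $m$-th interval by a constant times $\alpha m\,m^{-\alpha\lambda}=\alpha\,m^{1-\alpha\lambda}$.

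The \emph{main obstacle}, and the reason the construction is delicate, is exactly the interplay between these two estimates. The convergence step requires a per-interval contraction constant of exactly $1$: any $M>1$ would generate an $M^m$ factor that would overwhelm the slowly diverging $H_m$, so self-adjointness on $L^2_a$ must be used rather than the cruder similarity bound employed in Lemma \ref{cvglma} for the growth rate. The boundedness step then forces $\alpha$ to be large, since the factor $\alpha m$ produced by the inflated convergence rate is tamed by the decay $m^{-\alpha\lambda}$ only when $\alpha\lambda\ge1$. Hence any $\alpha\ge1/\lambda$ simultaneously renders $v$ bounded and drives $y(\cdot,T)=f$, which establishes the claim.
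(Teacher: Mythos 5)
Your proposal is correct and follows the same overall architecture as the paper's proof: reduce the closed-loop dynamics on the $m$-th interval to the autonomous equation \eqref{eq:cllp1} with generator $\alpha m A_a$, use Corollary \ref{CorII} for the uniform lower bound on $y$, accumulate the per-interval decay $e^{-\alpha\lambda/m}$ into $e^{-\alpha\lambda H_m}\to 0$ via the divergence of the harmonic series, and tame the $\alpha m$ factor in $\alpha m (ay)_x/y$ by $e^{-\alpha\lambda H_{m-1}}\lesssim m^{-\alpha\lambda}$ with $\alpha\ge 1/\lambda$ (the paper phrases this via the Euler--Mascheroni limit, you via $H_{m-1}\gtrsim \ln m$; these are the same estimate). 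Where you genuinely depart from the paper is in how the per-interval contractions are composed. The paper simply invokes Lemma \ref{cvglma} and asserts a single constant $M_0$ ``independent of $m$,'' which hides the real issue: naively chaining the estimate of Lemma \ref{cvglma} over $m$ intervals would produce $M_0^m e^{-\alpha\lambda H_m}$, which diverges for any $M_0>1$ because $H_m$ grows only logarithmically. You identify this obstacle explicitly and resolve it by working in the weighted space $L^2_a$ (with $\langle y-f,f\rangle_a=\int_0^1(y-f)\,dx=0$ by mass conservation), where self-adjointness and the spectral gap give a contraction constant of exactly $1$ on the orthogonal complement of $f$, so the constants do not accumulate and a single norm-equivalence factor suffices at the end. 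This makes your argument a more careful (and in my view more honest) version of the paper's; the same care is needed, and your spectral-calculus argument supplies it, for the graph-norm estimate \eqref{eq:expcn2} that underlies the bounds on $y_x$ and $\alpha m (ay)_x$.
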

\begin{proof}
Substituting $v(\cdot,t) = \frac{y_x}{y}- \alpha m \frac{(a y)_x}{y}$ whenever $t \in [\sum_{n=1}^{m-1} \frac{1}{n^2},\sum_{n=1}^{m} \frac{1}{n^2})$, in \eqref{eq:Mainsys1}, it can be seen that the solution of the PDE,  \eqref{eq:Mainsys1} exists over each time interval $[0,\sum_{n=1}^{m} \frac{1}{n^2})$ for each $m \in \mathbb{Z}_+$. This is true because this solution can be constructed from mild solutions of the {\it closed-loop} PDE
\begin{eqnarray}
\nonumber
&\tilde{y}_t = \alpha m (a\tilde{y})_{xx} &~~ in  ~~ (0,1) \times [0,\frac{1}{m^2}) \\ \nonumber
&\tilde{y}(\cdot,0) = \tilde{y}_0=y(\cdot,\sum_{n=1}^{m-1} \frac{1}{n^2}) &~~ in ~~ (0,1)  \\ \nonumber
&(a\tilde{y})_x(0,\cdot) =(a\tilde{y})_x(1,\cdot)= 0  &~~ in ~~ [0,\frac{1}{m^2}) \\ 
\label{eq:pssol}
\end{eqnarray}
and we get the relation $y(\cdot, \sum_{n=1}^{m-1} \frac{1}{n^2} + j) = \tilde{y}(0,j)$ for each $j \in [0,\frac{1}{m^2})$ and each $m \in \mathbb{Z}_+$.  Then it follows from lemma \ref{cvglma} that
\begin{align*}
\label{cves1}
|y(\cdot,\sum_{n=1}^{m}\frac{1}{n^2})-f\|_{L^2(\Omega)} \leq  M_0 e^{-\alpha  \lambda \sum_{n=1}^{m}\frac{n}{n^2}} \\ \nonumber
=  M_0 e^{-\alpha \lambda \sum_{n=1}^{m}\frac{1}{n})}
\end{align*}

for each $m \in \mathbb{Z}_+$, for some strictly positive constants $M_0$ and $ \lambda$ independent of $m$. Since the summation $\sum_{n=1}^{m}\frac{1}{n}$ is diverging we have $y(\cdot,T) = f$ if the solution is defined over the interval $[0,T]$. By continuity of $y$ on $[0,T)$, it follows that $y \in C([0,T);L^2(0,1))$ and can be extended to a unique mild solution $y \in C([0,T];L^2(0,1))$ defined over the time interval $[0,T]$. 
 
It is additionally required to prove that $v \in  L^\infty(0,T;L^\infty(0,1))$. As will be shown further ahead, these results will follow if $\alpha>0$ is chosen to be large enough. More specifically, it will be established that if $\alpha$ is large enough we can get uniform bounds on $1/y(\cdot, t)$, $\alpha m(ay)_{x}(\cdot, t)$ and $y_{x}(\cdot, t)$ as $t$ is varied over the interval, $[0,T)$.

First, we derive bounds on the term, 
$1/y(\cdot,t)$. Due to the lower bound on the initial condition $y_0$, and corollary  \ref{CorII}, it follows that, there exists a positive constant $d>0$ such that 

\begin{equation}
\label{eq:bd1}
y(\cdot,t) \geq d
\end{equation}
 for all $t\in [0,T)$. This gives us the uniform upper bound $1/d$  on the term $1/y(\cdot,t)$.

Next, we consider the term,  $y_x(\cdot, t)$. We note that $y_{0} \in \mathcal{D}(A_a)$. Hence, we can apply the estimates in lemma \ref{cvglma} to get
\begin{equation}
\label{eq:bd2}
 \|y_x(\cdot,\sum_{i=1}^{m}\frac{1}{n^2})\|_{H^1} \leq \tilde{M}e^{-\alpha \lambda \sum_{i=1}^{m}\frac{1}{n}} 
\end{equation}
 for some strictly positive constants, $\tilde{M}$. Here, we have implicitly used the fact that $a$ is twice weakly differentiable and the equivalence between the norm, $\|\cdot\|_g$ and the norm, $\|\cdot\|_{H^2}$. From this it follows that $\|y_x(\cdot,t)\|_{\infty}$ is uniformly bounded on the interval, $[0,T)$, due to the continuous embedding, $H^1(0,1) \hookrightarrow L^{\infty}(0,1)$ \cite{brezis2010functional}[Theorem 8.8].
 
Lastly, we need to bound the term, $\alpha m (ay)_x(\cdot,t)$. As in the estimates for $y_x(\cdot,t)$ in the above arguments, from lemma \ref{cvglma}, we have the estimate
\begin{equation}
\label{eq:bd3}
 \|\alpha m(ay)_x(\cdot,\sum_{i=1}^{m}\frac{1}{n^2}\|_{H^1} \leq \alpha m\tilde{M}_{1}e^{- \alpha \lambda \sum_{i=1}^{m}\frac{1}{n}}.
\end{equation}
for some strictly positive constant, $\tilde{M}_1$.
The right-hand side in the estimate,\eqref{eq:bd3}, is not uniformly bounded for arbitrary $\alpha>0$ due to dependence on $m$. However, we note that $\lim_{m \rightarrow \infty} -ln ~m  + \sum_{i=1}^{m}\frac{1}{n} = \gamma$ where $\gamma>0$ is the {\it Euler-Mascheroni} constant \cite{finch2003mathematical}[section 1.5]. Therefore, by setting $\alpha  \geq 1/ \lambda$ the right-side becomes uniformly bounded for all $m \in \mathbb{Z}_+$.

From the estimates, \eqref{eq:bd1}-\eqref{eq:bd3} it follows that if $\alpha>0$ is large enough, then $v \in L^\infty(0,T;L^\infty(0,1))$. This concludes the proof.
\end{proof}

\vspace{1mm}

\begin{remark}
In the above lemma, the choice, $v(\cdot,t) =   \frac{y_x}{y}- \alpha m \frac{(a y)_x}{y}$ is definitely not unique. In fact any control law of the form, $v(\cdot,t) =   \frac{y_x}{y}- \alpha m^\beta \frac{(a y)_x}{y}$ for numerous other values of $\beta$ and $\alpha$ will also achieve the desired objective due to the fact that an exponential function of a variable grows faster than a polynomial function, as the variable tends to infinity. Additionally, we could also replace the parameter $m$ with a continuous function, $m(t)$ such that $\int_0^T m(\tau)d\tau = \infty$.
\end{remark}

\vspace{1mm}

From the above lemma the following corollary follows.

\begin{corollary}
\label{prlma2}
Let $y_0 \in \mathcal{D}(A_a)$ be such that $y_0 \geq c $ for some strictly positive constant, $c$. Suppose $f \in W^{2, \infty}(0,1)$ such that $f\geq s$, $a = 1/f$ and $\int_0^1f(x)dx = \int_0^1y_0(x)dx$, for some strictly positive constant, $s$. Let $T >0$ be the final time. Then there exists $v \in L^{\infty}(0,T;L^{\infty}(0,T))$ such that  the mild solution, $y$, of the PDE, \eqref{eq:Mainsys1}, satisfies $y(\cdot,T) = f$.
\end{corollary}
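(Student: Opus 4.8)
The plan is to observe that Corollary~\ref{prlma2} is exactly Lemma~\ref{prlma} with the single difference that the terminal time is now an arbitrary $T>0$ rather than the specific value $T_0 = \sum_{n=1}^{\infty}\frac{1}{n^2} = \pi^2/6$. Since the diffusion coefficient in \eqref{eq:Mainsys1} is fixed at $1$, a naive rescaling of time is not available (it would alter the diffusion). Instead, I would repeat the switched-gain construction of Lemma~\ref{prlma} verbatim, but rescale the lengths of the subintervals so that they sum to $T$ instead of $T_0$. Concretely, set $c = T/T_0 = 6T/\pi^2$, partition $[0,T) = \cup_{m} [t_{m-1},t_m)$ with $t_m - t_{m-1} = \ell_m := c/m^2$ (so that $\sum_m \ell_m = c\,T_0 = T$), and on the $m$-th interval apply the same feedback $v = y_x/y - \alpha m\,(ay)_x/y$ with $a=1/f$. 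As in Lemma~\ref{prlma}, this yields the closed-loop equation $y_t = \alpha m\,(ay)_{xx}$ on each interval, whose mild solution exists by Corollary~\ref{CorII}.

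Next I would carry over the convergence estimate. By Lemma~\ref{cvglma}, on the $m$-th interval the solution contracts toward $f$ by the factor $e^{-\alpha m \lambda \ell_m} = e^{-\alpha\lambda c/m}$, so the accumulated bound at $t_m$ is $\|y(\cdot,t_m)-f\|_2 \le M_0\, e^{-\alpha\lambda c\sum_{n=1}^{m}\frac{1}{n}}$. Because $\sum_{n=1}^{m}\frac{1}{n}$ diverges, the right-hand side tends to $0$, and exactly as in Lemma~\ref{prlma} the solution extends continuously to $[0,T]$ with $y(\cdot,T)=f$. The lower bound $y(\cdot,t)\ge d>0$ on $[0,T)$ follows from the hypothesis $y_0 \ge c$ and Corollary~\ref{CorII} just as before, giving a uniform bound $1/d$ on $1/y$, and the $H^1$-bound on $y_x$ follows from the graph-norm estimate \eqref{eq:expcn2} together with the embedding $H^1(0,1)\hookrightarrow L^\infty(0,1)$.

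The only place where the arbitrary $T$ genuinely enters --- and the step I expect to be the main obstacle --- is the uniform bound on the gain term $\alpha m\,(ay)_x/y$, i.e. the analogue of estimate \eqref{eq:bd3}. Here Lemma~\ref{cvglma} gives $\|\alpha m (ay)_x(\cdot,t_{m-1})\|_{H^1} \le \alpha m\,\tilde{M}_1\, e^{-\alpha\lambda c\sum_{n=1}^{m-1}\frac{1}{n}}$, and using the Euler--Mascheroni asymptotics $\sum_{n=1}^{m-1}\frac{1}{n} = \ln m + \gamma + o(1)$ the right-hand side behaves like $\alpha\,m^{\,1-\alpha\lambda c}$. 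This is uniformly bounded in $m$ precisely when $\alpha\lambda c \ge 1$, i.e. when $\alpha \ge T_0/(\lambda T)$; note that the required gain now scales like $1/T$, which is the sole effect of shortening the horizon. Choosing such an $\alpha$ and combining the three bounds then gives $v \in L^\infty(0,T;L^\infty(0,1))$ and completes the proof. For the complementary case $T > T_0$ one could argue even more directly: apply Lemma~\ref{prlma} on $[0,T_0]$ to reach $f$, and then hold $y\equiv f$ on $[T_0,T]$ using the equilibrium feedback $v = f_x/f$, which is admissible since $f\in W^{1,\infty}(0,1)$ and $f\ge s>0$ and keeps $y$ stationary by Theorem~\ref{asymp}; however, the rescaling argument above already handles every $T>0$ uniformly.
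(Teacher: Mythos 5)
Your proposal is correct and is essentially the paper's own argument: the paper disposes of this corollary in one line as ``follows from Lemma \ref{prlma} using a straightforward scaling argument,'' and your rescaling of the subinterval lengths by $c = T/T_0$ with the compensating requirement $\alpha \geq T_0/(\lambda T)$ is precisely that argument carried out in detail (and your check that the gain bound is the only step where $T$ genuinely enters is the right place to focus).
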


The above corollary follows from lemma \ref{prlma} using a straightforward scaling argument.

Now, we are ready to state and prove our main theorem, where we relax the assumptions made in the previous corollary on the initial condition $y_0$. A few comments are due before we prove this theorem. There are two main problems with extending corollary \ref{prlma2} with the same control as defined in eq. \eqref{eq:fbctrl}, for general positive initial conditions $y_0 \in L^2(0,1)$. Both issues can cause $v$ to tend to $\infty$ as $ t \rightarrow 0$. Firstly, with the same control law as in \eqref{eq:fbctrl}, $y_0$ needs to have a strict lower bound, as assumed in lemma IV.7. Otherwise the term in the denominator, $y$, causes blow up in $v(⋅,t) = \frac{y_x}{y} - \alpha  m \frac{(a y)_x}{y}$ near $t=0$. Secondly, for general $y_0 \in L^2(0,1)$ the numerator terms ($y_x$ and $(ay)_x$)) in $v(⋅,t) = \frac{y_x}{y} - \alpha  m \frac{(a y)_x}{y}$ also cause blow up near $t=0$ because it might be true that $y_0 \notin \mathcal{D}(A_a)$. These issues can be remedied, as shown in the following proof, by modifying the control in eq. \eqref{eq:fbctrl} appropriately.

\begin{theorem}
\label{maintheo}
Let $y_0 \in L^2(0,1)$ be such that $y_0 \geq 0 $ and $\int_0^1 f(x)dx=1$. Suppose $f \in W^{2,\infty}(0,1)$. Let $T>0$ be the final time. Then there exists $v \in L^\infty(0,T;L^\infty(0,1))$, such that the unique mild solution,$y$, of the PDE, \eqref{eq:Mainsys1}, satisfies $y(T)=f$.
\end{theorem}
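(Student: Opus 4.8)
The plan is to reduce the general case to Corollary \ref{prlma2} by prepending two short regularization phases that turn the merely nonnegative, low-regularity datum $y_0$ into an initial condition that is both bounded below by a strictly positive constant and contained in $\mathcal{D}(A_a)$, while keeping the control bounded throughout. Two preliminary observations frame everything. First, every $v$ we select below yields a zero-flux problem, so mass is conserved: $\int_0^1 y(\cdot,t)\,dx = \int_0^1 y_0\,dx = \int_0^1 f\,dx = 1$ for all $t$ (here we use that $y_0$ is a probability density, $\int_0^1 y_0 = 1$, which is necessary for exact reachability), and this is precisely the compatibility condition required by the earlier results. Second, we use the standing positivity assumption $f \geq k > 0$ from the abstract, so that $a = 1/f \in W^{2,\infty}(0,1)$ is bounded below and the feedback $f_x/f$ lies in $L^\infty(0,1)$.

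First I would fix times $0 < t_0 < t_1 < T$ and set $v \equiv 0$ on $[0,t_0]$, so that $y$ solves the heat equation with Neumann boundary data. By Lemma \ref{Dave}, $y(\cdot,t_0) \geq c_{t_0}$ for some $c_{t_0} > 0$, and by parabolic smoothing $y(\cdot,t_0)$ is smooth; in particular it lies in $H^2(0,1)$, so $y_x(\cdot,t_0)$ and $(ay)_x(\cdot,t_0)$ are bounded. This phase disposes of the first obstruction (denominator blow-up) by producing a strictly positive lower bound. It does not, however, place $y(\cdot,t_0)$ in $\mathcal{D}(A_a)$, since the Neumann condition $y_x=0$ is not the zero-flux condition $(ay)_x=0$ that defines that domain.

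Next, on $[t_0,t_1]$ I would switch to the time-independent feedback $v = f_x/f$, whose closed loop is governed by the operator $B_f$ of Theorem \ref{thm:FTA2}. Since $y(\cdot,t_0) \geq c_{t_0} > 0$, Theorem \ref{thm:FTA2} gives a uniform lower bound $y(\cdot,t) \geq d > 0$ on this interval. Because $B_f$ is, up to the weighted-space isomorphism, the self-adjoint operator analyzed in Theorem \ref{asymp}, it generates an analytic semigroup, so for positive times the state is smoothed into the generator's domain; combined with $\mathcal{D}(B_f) = \mathcal{D}(A_a)$ (Theorem \ref{thm:FTA2}), this yields $y(\cdot,t_1) \in \mathcal{D}(A_a)$ with $y(\cdot,t_1) \geq d > 0$. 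This cures the second obstruction (the numerator/domain blow-up) while retaining positivity. I would then apply Corollary \ref{prlma2} on $[t_1,T]$ with initial condition $y(\cdot,t_1)$: the hypotheses $y(\cdot,t_1) \in \mathcal{D}(A_a)$, $y(\cdot,t_1) \geq d > 0$, $f \in W^{2,\infty}$ with $f \geq k$, and the matched masses $\int_0^1 y(\cdot,t_1) = \int_0^1 f$ all hold, furnishing $v \in L^\infty(t_1,T;L^\infty(0,1))$ whose mild solution reaches $y(\cdot,T) = f$ exactly. Concatenating the three controls and patching the mild solutions as in \eqref{eq:mldsol}, the resulting $v$ is bounded on each subinterval, hence $v \in L^\infty(0,T;L^\infty(0,1))$, and $y(T) = f$.

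I expect the main obstacle to be the rigorous justification of the second phase: that running the $B_f$-dynamics for a positive time simultaneously preserves the strict positivity obtained from the heat phase (via Theorem \ref{thm:FTA2}) and regularizes the Neumann-type state into $\mathcal{D}(A_a) = \mathcal{D}(B_f)$ with a quantitative $H^2$ bound, so that the feedback \eqref{eq:fbctrl} invoked in Corollary \ref{prlma2} stays bounded near $t = t_1$. Making the analytic-semigroup smoothing estimate precise, and reconciling the boundary conditions across the switching instants $t_0$ and $t_1$, is the delicate part; the remainder is bookkeeping on the conserved mass and on the concatenation of mild solutions.
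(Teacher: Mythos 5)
Your proposal follows essentially the same three-phase construction as the paper's own proof: a heat-equation phase with $v\equiv 0$ to obtain a strictly positive lower bound via Lemma \ref{Dave}, a phase with $v=f_x/f$ so that analytic-semigroup smoothing places the state in $\mathcal{D}(B_f)=\mathcal{D}(A_a)$ (Theorem \ref{thm:FTA2}), and then an application of Corollary \ref{prlma2} on the remaining time interval. The argument is correct, and your explicit remarks on mass conservation and on preserving the positive lower bound through the second phase only make more precise steps the paper leaves implicit.
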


\begin{proof}
Set $v(\cdot,t) = 0$, in \eqref{eq:Mainsys1}, for each $t \in [0, \epsilon/2]$ where $\epsilon \in (0,T)$ is small enough. Then the PDE, \eqref{eq:Mainsys1},  is the heat equation with Neumann boundary condition. From lemma \ref{Dave}, it follows that the solution, $y$, satisfies $y(\cdot, \epsilon)\geq c$ for some strictly positive constant $c$. Then for each $t \in (\epsilon/2,\epsilon]$ let $v(\cdot,t)= \frac{f_x}{f}$. Semigroups generated by elliptic operators are {\it analytic}. Hence, from regularizing properties of analytic semigroups, \cite{lunardi2012analytic}[Theorem 2.1.1], it follows that $y(\cdot,\epsilon) \in \mathcal{D}(B_f)$ where $B_f :\mathcal{D}(B_f) \rightarrow L^2(0,1)$ is the operator given by
\begin{equation}
B_fu = u_{xx} - (\frac{f_x}{f}u)_x
\end{equation}
for each $u \in {D}(B_f) =\lbrace w \in H^2(0,1); (w_x-\frac{f_x}{f}w)(0)=(w_x-\frac{f_x}{f}w)(1)=0 \rbrace$.
From theorem \ref{thm:FTA} we know this implies $y(\cdot,\epsilon) \in \mathcal{D}(A_a)$. Then the result follows from corollary \ref{prlma2}.
\end{proof}

\begin{remark}
(\textbf{The case when $f \in W^{1,\infty}(0,1)$})
Comparing theorem \ref{maintheo} and theorem \ref{asymp}, it is apparent that there is a gap in the result we have obtained. That is, while theorem \ref{asymp} states that any strictly positive, at least once differentiable function can be reached asymptotically, theorem \ref{maintheo} requires the target densities to be at least twice differentiable in order to be reachable in finite time. However, this assumption can be relaxed by modifying the argument in lemma \ref{prlma}. Particularly, if $f$ is only once differentiable then it is no longer true that $\mathcal{D}(A_a)$ is a subset of $H^2(0,1)$, in general. However, even if $u \notin H^2(0,1)$ we have that $\|(ay)_{xx}\|_{2} < \infty$. From this, and bounds on $\|ay\|_2$, it follows that $ \|(ay)_{x}\|_{\infty}  < \infty$ and hence, using the product rule, it follows that $\|y_x\|_{
\infty} < \infty $.

However, $B_f$ needs to be defined using its weak formulation in this case to make sense of the term 
$(\frac{f_x}{f}u)_x$. We avoid this issue for now and leave the more general case when $f \in W^{1,\infty}(0,1)$ for future work. 
\end{remark}

As pointed out in the last remark, using the approach in this paper, the requirement that $f \in W^{2,\infty}(0,1)$, can be relaxed. On the other hand, it is not clear how much regularity needs to be assumed when extending the technique in this paper to the case when the diffusion process evolves on a higher dimensional Euclidean space since embedding results depend on the dimension of the domain. However, if the constraint that $v$ is bounded is relaxed to admit square-integrable vector fields, then it is sufficient to establish the bounds on $\|\nabla y(\cdot,t)\|_2$ and $\|\nabla (ay)(\cdot,t)\|_2$, which immediately follows from estimates such as those in \ref{eq:bd2} and \ref{eq:bd3}. Such estimates on the $L^2$ norm of the control are not dimension dependent.

We would also like to point out that in proving controllability properities of the system \eqref{eq:Mainsys1}, we have taken advantage of the fact that diffusion enables infinite speed of propagation of the solution $y$ of the PDE \eqref{eq:Mainsys1} (lemma \ref{Dave}). Hence, the control laws constructed in this paper might need further modification if implemented in practice, since robots have limitations on their speed of movement. One possibility is to introduce 'virtual particles' that do propagate at infinite speeds and hence avoiding the division-by-zero in the control law \eqref{eq:fbctrl}. Another possibility is to have a more realistic model of noise in the system.

\section{CONCLUSION}
In this paper, we proved controllability properties of the Fokker-Planck equation with zero-flux boundary condition. In contrast to previous work, we established controllability with bounded control inputs. Our approach to establishing controllability using spectral properties of the elliptic operators under consideration is also novel. In our opinion, this provides a simpler approach to conclude controllability than methods in previous similar works. Future work will focus on extending the arguments in this paper to the case where the diffusion process evolves on higher-dimensional domains.

\section{APPENDIX}

\subsection[]{Proof of Theorem~\ref{thm:FTA} (See page~\pageref{thm:FTA})}
\fta*
\begin{proof}
We define the operator, $\tilde{A}_a :\mathcal{D}(\tilde{A}_a)  \rightarrow  L^2(0,1)$, defined by $\tilde{A}_au = a(x)u_{xx}$ for each $u \in {D}(\tilde{A}_a) =\lbrace w \in H^2(0,1); w_x(0)=w_x(1)=0 \rbrace$. 
Using the product rule for functions in Sobolev spaces, we can represent the operator, $\tilde{A}$, as $\tilde{A}u = (a u_x)_x - a_x u_x$ for each $u \in \mathcal{D}(\tilde{A}_a)$.                                                                                                                                                                                                                                                                                                                                                                                                                                                                                                                                                                                                                                                                                                                                                                                                                                                                                                                                                                                                                                                                                                                                                                                                                                                                                                                                                                                                                                                                                                                                                                                                                                                                                                                                                                                                                                                                                                                                                                                                                                                                                                                                                                                                                                                                                                                                                                                                                                                                                                                                                                                                                                                                                                                                                                                                                                                                                                                                                                                                                                                                                                                                                                                                                                                                                                                                                                                                                                                                                                                                                                                                                                                                                                                                                                                                                                                                                                                                                                                                                                                                                                                                                                                                                                                                                                                                                                                                                                                                                                                                                                                                                                                                                                                                                                                                                                                                                                                                                                                                                                                                                                                                                                                                                                                                                                                                                                                                                                                                                                                                                                                                                                                                                                                                                                                                                                                                                                                                                                                                                                                                                                                                                                                                                                                                                                                                                                                                                                                                                                                                                                                                                                                                                                                                                                                                                                                                                               

 The validity of the product rule of differentiation used above, that is $(pq)_x = p_x q+pq_x$ whenever $p,q\in H^{1}(0,1)$, can be seen by constructing an approximating sequence $p^n_x$ in $C^{\infty}(0,1)$ converging to $p$ in $H^1(0,1)$. Then it can be shown the integral $-\int_0^1 p^n(s)q(s) \phi_x(s)ds  = \int_0^1(p^n_x(s)q(s)+ p^n(s)q_x(s))\phi(s)ds$ for each $\phi \in C^\infty(0,1)$. Then taking the limit $n \rightarrow \infty $ gives us the validity of the product rule.

We define the bilinear form, $b:H^1(0,1) \times H^1(0,1) \rightarrow \mathbb{R}$, corresponding to this operator by,
\begin{equation}
b(u,\phi) = \big \langle a u_x ,\phi_x \big \rangle_2 +  \big  \langle a_xu_x,\phi \big \rangle_2
\end{equation}
for each $u,\phi \in H^1(0,1)$. $A_a$ is related to $b$, by 
\begin{equation}
\langle \tilde{A}_au,\phi \rangle_2= -b(u,\phi)
\end{equation}
for all $u \in \mathcal{D}(\tilde{A}_a)$ and all $\phi \in H^1(0,1)$. Using the above representation of $\tilde{A}_a$, from \cite{ouhabaz2009analysis}[Corollary 4.3] it follows that since $b$ is an {\it accretive}, closed and continuous bilinear form on $H^1(0,1)$, the associated operator, $\tilde{A}_a$ generates a {\it positivity preserving} semigroup, $(S(t))_{t \geq 0}$, on $L^2(0,1)$ such that for each $y_0 \in L^2(0,1)$ in \eqref{eq:varsys1}, the unique  mild solution of the PDE can be represented by $y(\cdot,t)= S(t)y_0$. Note that the above mentioned properties of the bilinear form have been established in \cite{ouhabaz2009analysis}. By {\it positivity preserving} we mean that if $y_0 \geq 0$ then $S(t)y_0 \geq 0$ for each $t \in [0,\infty)$.
Additionally, we note that $\tilde{A}_a$ has a eigenvalue at $0$. The function $\mathbf{1}$, defined by $\mathbf{1}(x) = 1$ for almost every $x \in (0,1)$, is an eigenvector corresponding to this eigenvalue, $0$. Hence, if $y_0 \geq c>0$ for some strictly positive parameter, $c$. Then $S(t)y_0 = S(t)(c\mathbf{1}+y_0-c\mathbf{1}) = c\mathbf{1}+S(t)(y_0-c\mathbf{1})$, for each $t \in [0, \infty)$. Since $(S(t))_{t\geq 0}$ is positivity preserving, it follows that $S(t)(y_0-c\mathbf{1}) \geq 0$ and hence $c$ is a lower bound on the solution $S(t)y_0$ if $c$ is a lower bound on the initial condition, $y_0$.
\end{proof}

\subsection[]{Proof of Theorem~\ref{thm:FTA2} (See page~\pageref{thm:FTA2})}
\ftak*
\begin{proof}
We only prove that $\mathcal{D}(B_f) = \mathcal{D}(A_a)$. The proof for the other statements follows the same line of arguments as in theorem \ref{thm:FTA}. 

The result, $\mathcal{D}(B_f) = \mathcal{D}(A_a)$, follows from the quotient rule applied at the boundary. That is, $(au)_x(0)= \big ( \frac{fu_x-uf_x}{f^2} \big )(0)=0$ implies $(u_x-u\frac{f_x}{f} )(0)=0$. The quotient rule for functions in Sobolev spaces can be seen to be true by an argument similar to the one made in theorem \ref{thm:FTA} for verification of the product rule.

\end{proof}

\bibliographystyle{plain}

\bibliography{cdcref}

\end{document}